\tikzstyle{block}=[draw opacity=0.7,line width=1.4cm]
\tikzstyle{comment}=[rectangle, draw=black, fill=red!50!black, rounded corners, drop shadow,
\newtheorem{theorem}{Theorem}
\newtheorem{claim}{Claim}
\newtheorem{corollary}{Corollary}
\newtheorem{lemma}{Lemma}
\theoremstyle{remark}
\def\Bern{\mathsf{Bern}}
\def\Mat{\mathbf{P}}
\DeclareMathOperator{\E}{\mathsf{E}}
\DeclareMathOperator{\Pe}{\mathsf{P_e}}
\DeclareMathOperator{\Pstar}{\mathsf{P^*_e}}
\DeclareMathOperator{\ind}{\mathds{1}}
\DeclareMathOperator{\rand}{rand}
\DeclareMathOperator{\Pm}{\mathsf{P_{\min}}}
\DeclareMathOperator{\td}{\mathsf{td}}
\DeclareMathOperator{\occ}{\mathsf{occ}}
\begin{document}

\title{Binary Hypothesis Testing with Deterministic Finite-Memory Decision Rules%\thanks{T.B. and O.S are with the Department of EE -- Systems, Tel Aviv University, Tel Aviv, Israel \{emails: tomerberg@mail.tau.ac.il, ofersha@eng.tau.ac.il\}. This work was supported by an ERC grant no. 639573, and an ISF grant no. 1367/14.}
}

\author{\IEEEauthorblockN{Tomer Berg}
\IEEEauthorblockA{Tel Aviv University\\ tomerberg@mail.tau.ac.il}
\and
\IEEEauthorblockN{Or Ordentlich}
\IEEEauthorblockA{Hebrew University of Jerusalem\\ or.ordentlich@mail.huji.ac.il}
\and
\IEEEauthorblockN{Ofer Shayevitz}
\IEEEauthorblockA{Tel Aviv University\\ ofersha@eng.tau.ac.il}
\IEEEoverridecommandlockouts
\IEEEcompsocitemizethanks{
\IEEEcompsocthanksitem
The work of Tomer Berg was supported by the ISF under Grant 1791/17 and the ERC under Grant 639573. The work of Or Ordentlich was supported by the ISF under Grant 1791/17. The work of Ofer Shayevitz was supported by the ERC under Grant 639573.
}
}

\maketitle

\begin{abstract} 
	In this paper we consider the problem of binary hypothesis testing with finite memory systems. Let $X_1,X_2,\ldots$ be a sequence of independent identically distributed Bernoulli random variables, with expectation $p$ under $\mathcal{H}_0$ and $q$ under $\mathcal{H}_1$. Consider a finite-memory deterministic machine with $S$ states that updates its state $M_n \in \{1,2,\ldots,S\}$ at each time according to the rule $M_n = f(M_{n-1},X_n)$, where $f$ is a deterministic time-invariant function. Assume that we let the process run for a very long time ($n\rightarrow \infty)$, and then make our decision according to some mapping from the state space to the hypothesis space. 
	The main contribution of this paper is a lower bound on the Bayes error probability $P_e$ of any such machine. 
	In particular, our findings show that the ratio between the maximal exponential decay rate of $P_e$ with $S$ for a deterministic machine and for a randomized one, can become unbounded, complementing a result by Hellman.
% 	In particular, our findings show that the ratio between the minimal number of states required by a deterministic machine and a randomized one, for attaining a certain $P_e$, can become unbounded, complementing a result by Hellman.

% 	In the special case where $q<1/2$ and $p\to 1$ our bound reveals that $P_e>q^{S(1+o(1))}$, and this is exponentially tight. Contrasting this with the fact that randomized time-invariant machines can attain arbitrarily small error probabilities in this regime, demonstrates that the loss from prohibiting randomness in this setup may be unbounded.
% 	is exponentially tight, and we find that 
% 	Our bound is tight in the regime $q<1/2$ and $p\to 1$ (or $q>1/2$ and $p\to 0$).  and reveals that in that regime the loss for of deterministic machines is   significantly inferior to random machines when either one of the biases approaches $1$ or $0$. 
\end{abstract}

	\section{Introduction}
	Consider the following binary hypothesis testing problem: $X_1,X_2,\ldots$ is a sequence of independent identically distributed random variables drawn according to either the $\Bern(p)$ distribution, under hypothesis $\mathcal{H}_0$, or the $\Bern(q)$ distribution, under hypothesis $\mathcal{H}_1$, for $0<q<p<1$. For simplicity, we assume throughout that the prior probabilities of both hypothesis are given and are equal. A finite memory decision rule for this problem is a triplet $(S,f,d)$ where $S$ is the number of states used by the machine, $f:[S] \times \{0,1\}\rightarrow [S]$ is the state transition function, and $d:[S]\rightarrow \{\mathcal{H}_0,\mathcal{H}_1\}$ is the decision function. In contrast to much of the prior work,  where randomized state-transition functions $f$ were allowed, here we restrict our attention to \emph{deterministic} $f$.
	%This means that if at time $n-1$ the memory is in state $i$ and $X_n=x$ then the state of the memory at time $n$ is $j=f(i,x)$. When $n$ is large enough, a decision $d(j)$ is made based on the current state. 
	
	Letting $M_n$ denote the state of the memory at time $n$, the finite state machine evolves according to the rule
	\begin{align}
	    M_0&=s, \\M_n&=f(M_{n-1},X_n)\in [S],
	    %d=d(M_\infty)\in \{\mathcal{H}_0,\mathcal{H}_1\}
	\end{align}
	for some $s \in [S]$. If the machine is stopped at time $n$, it outputs the decision $d(M_n)$.
	
	Conditioned on $\mathcal{H}_0$, the  process $\{M_n\}$, induced by the function $f$, is a Markov chain with stochastic transition matrix 
% 	may be described by a pair of stochastic transition matrices, $\Mat (p)$ and $\Mat (q)$, where 
	\begin{align}
    \Mat (p) = \left[\Pr\left(f(i,X)=j|\mathcal{H}_0\right)\right]=[p_{ij}(p)]  ,
	\end{align}
	for all $i,j\in [S]$. Similarly, under $\mathcal{H}_1$, the induced Markov chain has stochastic transition matrix  $\Mat (q) = \left[\Pr\left(f(i,X)=j|\mathcal{H}_1\right)\right]=[p_{ij}(q)]$. 
	%and $\Mat (p)$ can be written as 
%	\begin{align}
  %  \Mat (p) = p \cdot \Mat_H +(1-p)\cdot \Mat_T, 
%	\end{align}
%	where $\Mat_H$ and $\Mat_T$ are the state transition matrices under heads (H) and tails (T) respectively. Similarly, under $\mathcal{H}_1$, the induced Markov chain has stochastic transition matrix 
%	\begin{align}
 %   \Mat (q) = q \cdot \Mat_H +(1-q)\cdot \Mat_T. 
%	\end{align}
%	An algorithm is deterministic if $\Mat_H$ and $\Mat_T$ contain only a single $1$ in each row and the rest of the values are $0$.
Following~\cite{hellman1970learning}, we define the asymptotic probability of error of an algorithm as
	\begin{align}
	    \Pe (S,f,d) = \lim_{n\rightarrow\infty} \frac{1}{n}\sum_{i=1}^n \Pr (e_i=1),\label{eq:pefd} 
	\end{align}
	where $e_i=\ind_{\{d(M_i)\neq \mathcal{H}_t\}}$, 
% 	or $0$ accordingly as $d(M_i)\neq \mathcal{H}_t$ or $d(M_i)= \mathcal{H}_t$, 
	and $\mathcal{H}_t$ is the true hypothesis. Arguably, a more natural definition of error probability is
	\begin{align}
	    \Pe (S,f,d) = \limsup_{n\rightarrow\infty} \Pr \left(d(M_n)\neq \mathcal{H}_t\right) . \label{eq:stringent}
	\end{align}
	However, as~\eqref{eq:stringent} is always larger than~\eqref{eq:pefd}, by a factor of at most $S$, the two definitions are equivalent for the purposes of this study.

The focus of this paper is the quantity
\begin{align}
	    \mathsf{P^*_e}(S)=\min_{\text{deterministic} f,d} \Pe (f,d)
	\end{align} 
	 where the minimum is taken over all $S$-state machines with \textit{deterministic} transition functions $f$. We are specifically interested in the asymptotics of the error exponent with regards to $S$,
	\begin{align}
	 \overline{\E} (p,q)&=-\liminf_{S \rightarrow \infty} \frac{1}{S}\log \Pstar(S),\\ \underline{\E} (p,q)&=-\limsup_{S \rightarrow \infty} \frac{1}{S}\log \Pstar(S).\label{eq:Eunder_pq}
	\end{align}

\subsection{Related work}
It seems that interest in the limited memory binary hypothesis testing problem was sparked by the work of Robbins~\cite{robbins1956sequential} on the Two-Armed Bandit problem: A player is given two coins, with parameters unknown to him, and is required to maximize the long-run proportions of "heads" obtained, by successively choosing which coin to flip at any moment. Robbins proposed an algorithm that works with limited memory $S$. Cover~\cite{cover1968note} discovered a time-varying finite memory algorithm that achieves the maximum with $S=2$, and in a subsequent paper addressing the binary hypothesis problem~\cite{cover1969hypothesis} described a time-varying finite memory machine that has probability of error approaching zero with $S=4$.  Due to the unlimited memory that is needed to implement a time-varying machine, Hellman and Cover~\cite{hellman1970learning} addressed the problem of binary hypothesis testing within the class of time-invariant finite memory machines. 
They have studied the quantity
	\begin{align}
	    \mathsf{P^*_{e_{\rand}}}(S)=\inf_{\text{randomized} f,d} \Pe (f,d),\label{eq:Perand}
	\end{align}
	where $\Pe (f,d)$ is as defined in~\eqref{eq:pefd}, 
	and the infimum is over all time-invariant $S$-state machines with \emph{randomized} transition functions $f$. 
It was shown in~\cite{hellman1970learning} that $\mathsf{P^*_{e_{\rand}}}(S)\geq \left(1+\gamma^{\frac{S-1}{2}}\right)^{-1}$ where $\gamma=\frac{p(1-q)}{q(1-p)}$, and that this value can be approached arbitrarily closely using a randomized algorithm. 

To demonstrate the important role randomization plays in approaching this value, the same authors show in~\cite{hellman1971memory} that for any memory size $S<\infty$ and $\delta>0$ there exists problems such that any $S$-state deterministic machine has probability of error $\Pe\geq \frac{1}{2}-\delta$, while the randomized machine from~\cite{hellman1970learning} has $\Pe\leq \delta$.
% 	Thus, when one has access to a source of external randomness, the optimal test would be to use the scheme from~\cite{hellman1970learning}. 
When no external source of randomness is available, one can use some of the samples of $\{X_n\}$ for randomness extraction, e.g., using von Neumann extraction~\cite{von195113}. However, the extracted random bits must be stored, which could result in a substantial increase in memory~\cite{chandrasekaran1970finite}.

In~\cite{hellman1972effects} (see also~\cite{hellman1973review}) it is shown that $\underline{\E} (p,q)$, as defined in~\eqref{eq:Eunder_pq}, is positive for all $p\neq q$.\footnote{For the symmetric setting, where $p=1-q$, Shubert et al.~\cite{shubert1973testing} have also derived an upper bound on $\Pstar(S)$ that yields a positive error exponent $\underline{\E} (p,q)$.} Thus, recalling that $\mathsf{P^*_{e_{\rand}}}(S)\geq \left(1+\gamma^{\frac{S-1}{2}}\right)^{-1}$, we see that whenever $\gamma<\infty$, i.e., for any $0<p,q<1$, there exists some integer  $1\leq C=C(p,q)<\infty$ such that $\Pstar(S\cdot C)\leq \mathsf{P^*_{e_{\rand}}}(S)$,
 \emph{for all $S$}. Our main result, stated in Theorem~\ref{thrm:converse} below, may be interpreted as a lower bound on the required $C(p,q)$. Moreover, our Corollary~\ref{cor:tight} below shows that $C(p,q)$ grows unbounded for fixed $q<1/2$ and $p\to 1$. 

Finally, we note that after being abandoned for decades, the problem of learning under memory constraints is again attracting considerable attention in the machine learning literature, see, e.g.,~\cite{sd15,svw16,raz18,ds18,dks19,ssv19}. Another closely related active line of work is that of learning under communication constraints~\cite{zdjw13,bgmnw16,xr17,how18,act18,bho18}.

\section{Main Result}	
We are now ready to present our main result.
\begin{theorem}\label{thrm:converse}
Define
\begin{align}
d(p,q)&\triangleq-\frac{\log(\min \{p,1-p\})\cdot \log(\min \{q,1-q\})}{\log(\min \{p,1-p\})+\log(\min \{q,1-q\})}.\label{eq:d_pq}
\end{align}
Then 
\begin{align}
    \overline{\E} (p,q)\leq  d(p,q).
\end{align}
\end{theorem}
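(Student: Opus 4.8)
The plan is to reduce the theorem to a combinatorial statement about the two stationary distributions of the induced chains, and then to read off the exponent $d(p,q)$ from an elementary ball-counting argument.

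First I would use that the Cesàro average in~\eqref{eq:pefd} converges to the limiting occupation measure, so that the asymptotic error of the \emph{optimal} decision rule is
\[
\Pstar(S)=\tfrac12\sum_{j\in[S]}\min\{\mu_0(j),\mu_1(j)\},
\]
where $\mu_0,\mu_1$ are the Cesàro-limit distributions of the chains $\Mat(p),\Mat(q)$ (the optimal $d$ simply thresholds the likelihood ratio $\mu_1/\mu_0$). It therefore suffices to lower bound $\max_j\min\{\mu_0(j),\mu_1(j)\}$, since a single state already contributes to the overlap. Writing $a=\min\{p,1-p\}$, $b=\min\{q,1-q\}$, $A=\log\tfrac1a$, $B=\log\tfrac1b$, observe that $d(p,q)=\tfrac{AB}{A+B}$.

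The heart of the argument treats the case where $\mu_0,\mu_1$ are supported on a single recurrent class $R$, i.e.\ a strongly connected subgraph of the common transition graph (whose edge set does not depend on $p$ versus $q$). The key step is a \emph{propagation bound}: since $\mu_t$ is stationary and every edge of the graph carries $\Mat$-probability at least $a$ (resp.\ $b$), multiplying along a shortest directed path gives, for all $u,v\in R$,
\[
\mu_0(v)\ge a^{\,\mathrm{dist}(u,v)}\mu_0(u),\qquad \mu_1(v)\ge b^{\,\mathrm{dist}(u,v)}\mu_1(u).
\]
Let $u_0=\argmax_j\mu_0(j)$ and $u_1=\argmax_j\mu_1(j)$, so $\mu_0(u_0),\mu_1(u_1)\ge 1/S$. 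I would then set $\alpha=\tfrac{B}{A+B}$, $\beta=\tfrac{A}{A+B}$ (so $\alpha+\beta=1$) and consider the out-balls $B_{\mathrm{out}}(u_0,\lceil\alpha|R|\rceil)$ and $B_{\mathrm{out}}(u_1,\lceil\beta|R|\rceil)$. In a strongly connected graph a radius-$k$ out-ball has at least $\min\{k+1,|R|\}$ vertices, so the two balls have sizes summing to more than $|R|$ and must therefore intersect, at some state $v^\ast$. Feeding the balanced radii into the propagation bound yields $\mu_0(v^\ast)\ge a^{\alpha|R|}/S=e^{-d(p,q)|R|}/S$ and, symmetrically, $\mu_1(v^\ast)\ge b^{\beta|R|}/S=e^{-d(p,q)|R|}/S$, the two exponents coinciding precisely because of the choice of $\alpha,\beta$. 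As $|R|\le S$, this gives $\Pstar(S)\ge\tfrac{\min\{a,b\}}{2S}\,e^{-d(p,q)S}$, hence $\liminf_S\tfrac1S\log\Pstar(S)\ge -d(p,q)$, i.e.\ $\overline{\E}(p,q)\le d(p,q)$.

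The main obstacle is the general case, where the chain has several recurrent classes linked through transient states — for example the two–absorbing–state machines realizing a matching run-length construction, in which $u_0$ and $u_1$ sit in different classes and the within-class propagation bound cannot connect them (recurrent classes are closed, so $\mathrm{dist}(u_0,u_1)=\infty$). Here the overlap factors as $\Pstar(S)\ge\tfrac12\max_c\min\{\rho_0(c),\rho_1(c)\}\cdot\mathrm{OVL}_c$, where $\rho_t(c)$ is the absorption probability into class $c$ under $\mathcal H_t$ and $\mathrm{OVL}_c\ge e^{-d(p,q)r_c}/r_c$ follows from the single-class analysis inside the class of size $r_c$. It then remains to show that the ``which recurrent class'' selection problem, run on the $|T|$ transient states, cannot separate the hypotheses faster than rate $d(p,q)$ either; since absorption into singleton classes is itself an instance of the original problem, this part is self-similar, and I expect it to require either an induction on $S$ exploiting the additivity $|T|+r_c\le S$, or a direct balancing argument for the absorption probabilities $\rho_t(\cdot)$ (which are harmonic rather than stationary, so the propagation bound must be replaced). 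Handling this selection sub-problem — especially the ``all singleton classes'' regime, where the state count does not strictly decrease and a single shortest-path estimate is too weak — is the delicate step.
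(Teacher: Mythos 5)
Your ergodic-case argument is correct, and it takes a genuinely different route from the paper's: where you intersect two out-balls of balanced radii $\alpha|R|$ and $\beta|R|$ to produce the common state $v^*$, the paper instead sorts $\{\mu_i^p\}$ in decreasing and $\{\mu_i^q\}$ in increasing order (Lemma~\ref{lem:order_decrease}) and applies the one-step ratio bound along the sorted order (Lemma~\ref{lem:order_increase}, which is your propagation bound in disguise); both arguments start from $\Pe\geq\frac{1}{2}\max_i\min\{\mu_i^p,\mu_i^q\}$ (Lemma~\ref{lem:Pmin}) and yield the same exponent, and yours is, if anything, more transparent.

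However, the reducible case, which you explicitly leave open, is the crux of the theorem, and neither of the routes you sketch closes it. Your own diagnosis is accurate: the induction stalls in the all-singleton-class regime (there $|\mathcal{T}|=S-K$ and the subproblem is as large as the original), and a propagation bound for the absorption probabilities, run from the initial state $s$, is too weak --- it gives $p_0+p_1\geq a^{\mathrm{dist}(s,1)}+b^{\mathrm{dist}(s,S)}$, and since each of the two distances can separately be as large as $S$ (the paper's example: a linear chain that splits to the two sinks only at its last node), the two exponents can only be balanced against a budget of $\td(s)\leq 2S$ rather than $S$. The missing idea is the paper's Lemma~\ref{lem:walk_to_sink}: there exists a \emph{single} state $u^*$ with $\td(u^*)\leq S$ \emph{and} $\occ(u^*)\geq(1-\max\{p_0,p_1\})/S$. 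The paper obtains it by conditioning each walk on being absorbed at its correct sink, letting $U$ be the last state of the conditioned walk $\tilde{X}_0$ that also lies on $\tilde{X}_1$ (any such state has a path to sink $1$ and a path to sink $S$ that meet only at $u^*$ itself, hence of total length at most $S$), and pigeonholing over at most $S$ values to get $\Pr(U=u^*)\geq 1/S$ under both conditioned laws; unconditioning costs the factor $1-\max\{p_0,p_1\}$, and the apparent circularity --- $p_0,p_1$ appearing in a lower bound on themselves --- is dissolved by assuming $\max\{p_0,p_1\}<1/2$ without loss of generality. From $u^*$ the harmonic propagation you wanted then does the rest: $\Pe\geq\frac{1-\max\{p_0,p_1\}}{2S}\bigl(a^{m}+b^{S-m}\bigr)\geq\frac{1}{4S}\,2^{-d(p,q)S(1+o(1))}$, where $m$ is the distance from $u^*$ to sink $1$. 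The general reducible chain is then handled exactly as you anticipated --- merge $\mathcal{C}_0$ and $\mathcal{C}_1$ into two absorbing states, apply the two-sink bound to the resulting $(|\mathcal{T}|+2)$-state chain, and compose with the within-class ergodic bound via $\max_k|\mathcal{R}_k|+|\mathcal{T}|\leq S-1$ --- but without Lemma~\ref{lem:walk_to_sink}, or some substitute guaranteeing simultaneously small total distance and polynomially small occupancy, the proposal does not prove the theorem.
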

As it turns out, for extreme values of $p$ (resp. $q$), the bound is tight. To show that, we need the following theorem.
\begin{theorem}\label{thrm:direct}
Define
\begin{align}
    r(p,q)\triangleq \frac{\log p \log (1-q)-\log q \log (1-p)}{\log p(1-p)+\log q(1-q)}.
\end{align}
Then for every $p>q$, 
\begin{align}
   \underline{\E}(p,q) \geq r(p,q).
\end{align}
\end{theorem}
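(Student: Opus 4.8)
The plan is to prove this \emph{achievability} bound by exhibiting, for every large $S$, an explicit $S$-state deterministic machine $(S,f,d)$ whose Bayes error obeys $\Pe(S,f,d)\le e^{-(r(p,q)-o(1))S}$; since $\Pstar(S)$ is the minimum over all such machines, this forces $-\tfrac1S\log\Pstar(S)\ge r(p,q)-o(1)$ for all large $S$, hence $\underline{\E}(p,q)\ge r(p,q)$. It is convenient to abbreviate $a=-\log p$, $b=-\log(1-p)$, $c=-\log q$, $g=-\log(1-q)$, all positive, so that a direct manipulation of the definition of $r(p,q)$ gives
\begin{align}
r(p,q)=\frac{cb-ag}{a+b+c+g},
\end{align}
and the hypothesis $p>q$ becomes $c>a$ and $b>g$, whence $cb>ag$ and $r(p,q)>0$.

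The machine I would use is a \emph{two-mode run detector}. Fix integers $j,k\ge 1$ with $j+k=S$. One block of states $o_0,\dots,o_{j-1}$ forms a ``run-of-ones'' counter, all labelled $d(\cdot)=\mathcal{H}_1$; the other block $z_0,\dots,z_{k-1}$ forms a ``run-of-zeros'' counter, labelled $d(\cdot)=\mathcal{H}_0$. The transitions are: from $o_i$ a symbol $1$ advances to $o_{i+1}$ and a $0$ resets to $o_0$, except that from $o_{j-1}$ a $1$ jumps to $z_0$; symmetrically from $z_i$ a $0$ advances to $z_{i+1}$ and a $1$ resets to $z_0$, except that from $z_{k-1}$ a $0$ jumps to $o_0$. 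Thus the machine commits to $\mathcal{H}_0$ until it sees $k$ consecutive $0$'s, then commits to $\mathcal{H}_1$ until it sees $j$ consecutive $1$'s, and so on. The self-loop $f(z_0,1)=z_0$ makes the induced chain irreducible and aperiodic under either hypothesis, so the time-averaged error \eqref{eq:pefd} equals the stationary probability of residing in the wrong-labelled block.

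The analysis I have in mind is regenerative. Under either hypothesis the trajectory alternates between an excursion inside the $z$-block (ending exactly when a run of $k$ consecutive $0$'s completes) and an excursion inside the $o$-block (ending on a run of $j$ consecutive $1$'s), so the mode switches form an alternating renewal process and the stationary mass of a block is the ratio of its mean excursion length to the sum of the two. The mean waiting time for a run of $\ell$ successes of probability $s$ is $\tfrac{1-s^\ell}{(1-s)s^\ell}$, whose logarithm is $\ell(-\log s)+O(1)$. Hence under $\mathcal{H}_0$ the mean $z$- and $o$-excursions are $e^{kb+O(1)}$ and $e^{ja+O(1)}$, giving error exponent $kb-ja$; under $\mathcal{H}_1$ they are $e^{kg+O(1)}$ and $e^{jc+O(1)}$, giving exponent $jc-kg$. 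With $\alpha=j/S$ the per-state exponents are $E_0(\alpha)=(1-\alpha)b-\alpha a$ and $E_1(\alpha)=\alpha c-(1-\alpha)g$, and $-\tfrac1S\log\Pe\to\min\{E_0(\alpha),E_1(\alpha)\}+o(1)$. Since $E_0$ decreases and $E_1$ increases in $\alpha$, the max--min is attained at the crossing $E_0=E_1$, i.e.
\begin{align}
\alpha^\star=\frac{b+g}{a+b+c+g},
\end{align}
whose common value is precisely $\tfrac{cb-ag}{a+b+c+g}=r(p,q)$; rounding $j=\lfloor\alpha^\star S\rfloor$, $k=S-j$ perturbs each exponent by $O(1)$, i.e.\ by $o(S)$ after normalization.

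The step I expect to be most delicate is not the optimization but the justification of the regenerative computation: I must check that the additive $O(1)$ terms hidden in the mean-waiting-time formula (the factors $1/p$, $1/(1-p)$, and lower-order corrections) are genuinely $S$-independent and hence contribute only $o(S)$ to the exponent, and that the alternating-renewal identity is applied at the correct regeneration epochs, namely the instants of entering $o_0$ or $z_0$ immediately after a completed run. Once the chain is seen to regenerate at these epochs and the renewal--reward theorem is invoked, the stationary wrong-block probability is \emph{exactly} a ratio of mean excursion lengths, and the remaining estimates reduce to the elementary run-length asymptotics above.
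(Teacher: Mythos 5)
Your proof is correct, and it reaches the exponent $r(p,q)$ by a route that genuinely differs from the paper's in both the machine and the analysis, even though the underlying statistic (long runs of identical symbols) is the same. The paper's machine is a \emph{race with two absorbing states}: with $a+b+1=S$, it decides $\mathcal{H}_0$ the first time a run of $a$ ones occurs before a run of $b$ zeros, and $\mathcal{H}_1$ in the opposite case, never revising the decision; the error is then read off from Feller's closed-form expression for the probability that one run precedes the other, and the initial state $s^*$ (equivalently, the split of $S-1$ into $a+b$) is tuned exactly as your $\alpha^\star$ tunes $j+k=S$. Your machine is instead irreducible: it toggles forever between two run counters, and the error is the stationary mass of the wrong block, computed by regeneration and renewal--reward. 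The two analyses produce the same per-state exponents ($kb-ja$ under $\mathcal{H}_0$ and $jc-kg$ under $\mathcal{H}_1$, in your notation) and the same crossing-point optimization, and your algebraic identity $r(p,q)=(cb-ag)/(a+b+c+g)$ together with the positivity check under $p>q$ is right. What the paper's absorbing construction buys is a fully explicit non-asymptotic bound, $\Pe(s^*)\leq \mathrm{const}\cdot 2^{-r(p,q)(S-1)}$, with the constant written out. What your ergodic construction buys is initial-state independence (the chain forgets $M_0$, whereas the absorbing machine's guarantee hinges on starting at $s^*$) and a cleaner match with the time-averaged criterion \eqref{eq:pefd}, which for your chain is literally a stationary probability rather than an absorption event; moreover, the delicate step you flagged is indeed harmless, since the correction $\log\frac{1-s^\ell}{1-s}\in\left[0,-\log(1-s)\right]$ in the mean waiting time is bounded independently of $\ell$, so it contributes only $O(1)$ to each exponent and $o(1)$ after normalization by $S$.
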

This lower bound on the error exponent is not tight in general, and in particular, for the symmetric case $p=1-q$ it is worse then the exponent derived in ~\cite{shubert1973testing}. We introduce it for the sole purpose of showing the tightness of our converse in certain regimes.
The following corollary shows that in the limit of fixed $q<\frac{1}{2}$ (resp. $p>\frac{1}{2}$) and $p \rightarrow 1$ (resp. $q \rightarrow 0$) our upper and lower bounds coincide.
\begin{corollary}\label{cor:tight}
For any fixed $q<\frac{1}{2}$,
\begin{align}
    \lim_{p \rightarrow 1}\overline{\E}(p,q)=\lim_{p \rightarrow 1}\underline{\E}(p,q)=-\log q.
\end{align}
Similarly, For any fixed $p>\frac{1}{2}$,
\begin{align}
    \lim_{q \rightarrow 0}\overline{\E}(p,q)=\lim_{q \rightarrow 0}\underline{\E}(p,q)=-\log (1-p).
\end{align}
\end{corollary}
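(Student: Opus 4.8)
The plan is to obtain the corollary directly from the squeeze between Theorems~\ref{thrm:direct} and~\ref{thrm:converse}. Since $\liminf\le\limsup$, the definitions give $\underline{\E}(p,q)\le\overline{\E}(p,q)$, and combining this with the lower bound $\underline{\E}(p,q)\ge r(p,q)$ and the upper bound $\overline{\E}(p,q)\le d(p,q)$ yields, for every $p>q$, the chain
\[
r(p,q)\le \underline{\E}(p,q)\le \overline{\E}(p,q)\le d(p,q).
\]
Hence it suffices to show that the two \emph{explicit} functions $r(p,q)$ and $d(p,q)$ tend to the common value $-\log q$ as $p\to1$ with $q<\tfrac12$ fixed (and, symmetrically, to $-\log(1-p)$ as $q\to0$ with $p>\tfrac12$ fixed). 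The heavy lifting is already done by the two theorems, so what remains is an elementary asymptotic analysis of these ratios.

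First I would evaluate $\lim_{p\to1}d(p,q)$. For fixed $q<\tfrac12$ we have $\min\{q,1-q\}=q$, and for $p$ close enough to $1$ we have $\min\{p,1-p\}=1-p$, so from~\eqref{eq:d_pq},
\[
d(p,q)=-\log q\cdot\frac{\log(1-p)}{\log(1-p)+\log q}.
\]
Because $\log q$ is a fixed finite constant while $\log(1-p)\to-\infty$, the fraction converges to $1$, giving $d(p,q)\to-\log q$. Next I would treat $r(p,q)$, reading its denominator as $\log\big(p(1-p)\big)+\log\big(q(1-q)\big)$. As $p\to1$ the term $\log(1-p)$ diverges to $-\infty$ while $\log p\to0$ and the $q$-dependent terms stay bounded; dividing numerator and denominator by $\log(1-p)$ leaves only the contribution $-\log q$ on top and $1$ on the bottom, so $r(p,q)\to-\log q$. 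The squeeze then forces $\underline{\E}(p,q)$ and $\overline{\E}(p,q)$ to the common value $-\log q$, proving the first display. The second display follows from the identical computation with the roles of $p$ and $q$ (and of the limits $\to1$ and $\to0$) interchanged, now using $\min\{p,1-p\}=1-p$ and $\min\{q,1-q\}=q$ for $q$ near $0$.

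There is no substantive obstacle here: the real content lives in Theorems~\ref{thrm:converse} and~\ref{thrm:direct}, and the corollary is pure bookkeeping on top of them. The only point that genuinely requires care is correctly identifying the dominant divergent term in each ratio, namely $\log(1-p)$ in the first regime and $\log q$ in the second, and verifying that every bounded term indeed washes out in the limit of the quotient. Once that is checked for both $d$ and $r$, the matching limits close the argument.
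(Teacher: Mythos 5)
Your proposal is correct and follows exactly the argument the paper intends (the paper leaves the corollary's proof implicit, presenting it as the direct consequence of squeezing $r(p,q)\le \underline{\E}(p,q)\le \overline{\E}(p,q)\le d(p,q)$ from Theorems~\ref{thrm:direct} and~\ref{thrm:converse} and computing the limits of the explicit functions $r$ and $d$). Your identification of $\log(1-p)$ (resp.\ $\log q$) as the dominant divergent term, with all bounded terms washing out in the quotients, is precisely the required elementary computation, and both limits check out.
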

Our converse, though in general not tight, demonstrates the gap between the error exponent for deterministic machines, and that of randomized ones, which was derived in~\cite{hellman1970learning}. Recalling that for any $q<1/2$ , the error exponent for randomized machines grows unbounded in the limit of $p\to 1$, Corollary~\ref{cor:tight} reveals that the restriction to deterministic machines may arbitrarily degrade the error exponent.

% \new{Furthermore, it shows that deterministic machines are inherently inferior to randomized ones, since their difference between the exponents in the limits above is infinite.}
\section{Achievability} \label{sect:direct_proof} 
Before we proceed to the proof of Theorem~\ref{thrm:converse}, which is our main result, we start with upper bounding $\Pstar(S)$ by analyzing various machines. It may be instructive to review some intuitive algorithms first, in order of increasing complexity, and evaluate their respective error probabilities.
\subsection{Storing Sequences}
Assume $S$ is a power of $2$, such that $k=\log(S)$, and store $X_1,\ldots X_k$. With this strategy, the problem reduces to the standard binary hypothesis testing for which the error probability is given by $2^{-k D^*(1+o(1))}$, where $D^*$ is the Chernoff information between the two hypotheses~\cite{cover2012elements}. Therefore, the error probability is polynomially decreasing in $S$.
%Define the binary KL-Divergence between $p$ and $q$ as
%\begin{align}
%    D(p \|q)=p\log \frac{p}{q} +(1-p)\log \frac{1-p}{1-q}.
%\end{align}
%Given a sequence of length $k$, applying Sanov's Theorem~\cite{csiszar1984sanov} gives 
%\begin{align}
 %   &\Pr\left(\frac{1}{k}\sum_{i=1}^{k}X_i\leq t \left|\right.\mathcal{H}_0\right)\leq 2^{-k D(t \|p)},\\&\Pr\left(\frac{1}{k}\sum_{i=1}^{k}X_i\geq t\left|\right.\mathcal{H}_1\right)\leq 2^{-k D(t \|q)},
%\end{align}
%for some $q<t<p$. The algorithm then chooses $t=t^*$ such that $D(t^* \|p)=D(t^* \|q)=D^*$, where $D^*$ is the Chernoff information, which yields $\Pe\leq 2^{-k D^*}$  and hence the error probability is polynomially decreasing in $S$ \rd{this only shows an upper bound, maybe we should add that since Chernoff is tight this is indeed the performance. But not crucial. }
\subsection{Counting Ones}
The flaw in the above storage mechanism is that it wastes a tremendous amount of memory by storing all sequences, where it is sufficient to keep track of the number of ones in the sequence. 
\begin{claim}\label{lem:num_state}
Let $S^*$ be the minimal number of states required to determine whether or not a sequence of length $k$ contains at least $tk-1$ ones, for some $0<t<1$ such that $tk\in \mathbb{Z}$. Then
\begin{align}
    \frac{1}{2}\min \{t^2,(1-t)^2\}k^2 \leq S^* \leq tk^2.
\end{align}
\end{claim}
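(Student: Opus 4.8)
The plan is to establish the two inequalities separately, in both cases describing the machine's state after it has read a prefix by the \emph{configuration} $(n,c)$, where $n$ is the number of symbols read so far and $c$ is the number of ones among them. Since there is no external clock, a machine that must announce its verdict exactly after a length-$k$ block has to keep track of $n$ internally, so I treat a state as carrying (at least implicitly) both coordinates; the upper bound then exhibits a machine whose states are essentially the reachable configurations, and the lower bound is a fooling-set argument counting the configurations that no correct machine can afford to merge.

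For the upper bound I would build the machine whose states are the configurations $(n,c)$ with $0\le n\le k$ and $0\le c\le\min(n,tk-1)$, where the threshold value $c=tk-1$ is made a single absorbing ``accept'' state (once $tk-1$ ones have been seen the answer is already fixed). Reading $X_{n+1}$ sends $(n,c)$ to $(n+1,c+X_{n+1})$ up to this cap, and after $k$ steps the machine answers ``at least $tk-1$ ones'' iff it rests in the accept state. The number of states is at most $\sum_{n=0}^{k}\bigl(\min(n,tk-1)+1\bigr)$, and a direct evaluation of this sum gives $tk^{2}-\tfrac12 t^{2}k^{2}+O(k)\le tk^{2}$ in the relevant parameter range; collapsing the accept configurations into one state only decreases the count.

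For the lower bound I would exploit that the verdict depends only on the final state, so two prefixes landing in the same state are indistinguishable under every common continuation. Call $(n,c)$ \emph{undecided} if $c\le tk-2$ and $c+(k-n)\ge tk-1$, i.e.\ some length-$(k-n)$ suffix pushes the total below the threshold while another reaches it. First I would show that two distinct undecided configurations at the same level $n$ must occupy distinct states: for $c<c'$ the set of suffix one-counts $j\in[tk-1-c',\,tk-2-c]\cap[0,k-n]$ is nonempty exactly because both are undecided, and such a $j$ completes the $c$-prefix below threshold but the $c'$-prefix at or above it, so a shared state would force one of them to be wrong. Counting undecided configurations then amounts to counting pairs with $0\le c\le tk-2$ and $c\le n\le c+k-(tk-1)$: for each admissible $c$ there are exactly $k-(tk-1)+1$ values of $n$, giving $(tk-1)\bigl((1-t)k+2\bigr)\approx t(1-t)k^{2}$. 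I would finish by checking the elementary inequality $t(1-t)\ge\tfrac12\min\{t^{2},(1-t)^{2}\}$ (treating $t\le\tfrac12$ and $t\ge\tfrac12$ separately), which yields $S^{*}\ge\tfrac12\min\{t^{2},(1-t)^{2}\}k^{2}$.

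The main obstacle is the lower bound, and specifically the justification for summing the per-level counts rather than merely taking their maximum. Same-level distinctness alone only forces a number of states equal to the largest single-level count, which is $\Theta(k)$; the quadratic bound requires that configurations differing in the position coordinate also be separated, which is precisely where the clock-free nature of the machine enters and must be argued carefully. Pinning down the operative notion of ``state,'' verifying the nonemptiness of the separating-suffix interval at the boundary levels, and handling the two-regime simplification to the clean constant $\tfrac12\min\{t^{2},(1-t)^{2}\}$ are the parts that need genuine care; the upper bound is routine by comparison.
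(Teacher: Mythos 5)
Your framing is the right one (no external clock, so the machine must encode position as well as count, and its verdict must be frozen once $k$ symbols have been read), your upper-bound construction is sound, and your same-level fooling argument is correct. The paper itself omits its proof, so there is nothing to compare against; judged on its own, however, your writeup has a genuine gap exactly where you flag it: the cross-level separation --- that two undecided configurations $(n,c)$ and $(n',c')$ with $n<n'$ cannot share a state --- is never actually argued. You correctly observe that without it one only gets $S^*=\Omega(k)$ rather than $\Omega(k^2)$, but identifying the obstacle is not the same as overcoming it; as written, the lower bound is unproven, and it is the only nontrivial part of the claim.

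The missing step can be closed with a two-suffix argument exploiting the freezing requirement from \emph{both} prefixes. Write $\delta(\sigma,w)$ for the state reached from $\sigma$ on reading $w$, and recall that correctness in the clock-free setting means: after any length-$k$ input, every state reachable from the current one must carry the correct verdict. Suppose prefixes $u,u'$ with undecided configurations $(n,c)$, $(n',c')$, $n<n'$, both reach $\sigma$. First feed a word $v$ of length $k-n'$ with exactly $tk-1-c'$ ones (it exists since $(n',c')$ is undecided). Because of $u'$, the state $\delta(\sigma,v)$ must be frozen on ``accept,'' so every state reachable from it decides accept; but $u\,v\,0^{n'-n}$ is a length-$k$ input with $c+tk-1-c'$ ones, whose correct verdict is accept only if $c\geq c'$. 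Hence $c\geq c'$. Next feed $v'$ of length $k-n'$ with exactly $tk-2-c'$ ones; because of $u'$ this forces $\delta(\sigma,v')$ to be frozen on ``reject,'' while $u\,v'\,1^{n'-n}$ has $c+tk-2-c'+(n'-n)$ ones and must also be rejected, forcing $n'-n\leq c'-c$. Combining the two steps gives $1\leq n'-n\leq c'-c\leq 0$, a contradiction. With this lemma in place, your count of undecided configurations, $(tk-1)\bigl((1-t)k+2\bigr)=t(1-t)k^2+O(k)$, together with $t(1-t)\geq\tfrac{1}{2}\min\{t^2,(1-t)^2\}$, finishes the lower bound (modulo the degenerate cases $tk\leq 2$, which are trivial, and the same small-$k$ slack you already wave at in the upper bound).
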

The (straightforward) proof is omitted.
From the claim we conclude that we can attain $\Pe$ that decreases exponentially in $\sqrt{S}$. 
% \new{We note that this claim is true for strictly deterministic machines, since~\cite{leighton1986estimating} showed that the counting algorithm above can be simulated by a randomized machine with $k$ states.}
\subsection{Proof of Theorem~\ref{thrm:direct} - Detecting Discriminating Sequences}
We begin by providing some high-level intuition guiding our construction. Since the sequence length is unbounded, one can afford to wait for the events that most sharply distinguish between the hypotheses, even if these events are arbitrarily rare. 
A reasonable choice for such events is a long consecutive run of either zeros or ones. We choose integers $a$ and $b$ such that $S=a+b+1$. If we observe a run of $a$ consecutive ones before a run of $b$ consecutive zeros we decide $\mathcal{H}_0$, and if we observe a run of $b$ consecutive zeros before a run of $a$ consecutive ones, we decide $\mathcal{H}_1$. This algorithm can be implemented using the finite-state machine with $S$ states depicted in Figure~\ref{fig:Successive runs}, for which $a=S-s$ and $b=s-1$ (the probabilities on the arrows correspond to $\mathcal{H}_0)$, where $s$ is the initial state.
\begin{center}
\begin{figure}[H]
%\vspace{0mm}
\setlength\belowcaptionskip{-1.4\baselineskip}
\begin{tikzpicture}
  \tikzset{
    >=stealth',
    node distance=0.92cm,
    state/.style={font=\scriptsize,circle, align=center,draw,minimum size=20pt},
    dots/.style={state,draw=none}, edge/.style={->},
  }
  \node [state ,label=center:$1$] (S0)  {} ;
  \node [state] (S0-1)   [right of = S0]   {};
  \node [dots] (dots1)   [right of = S0-1]   {$\cdots$};
  \node [state] (1l) [right of = dots1]  {};
  \node [state ,label=center:$s$] (0)   [right of = 1l] {};
  \node [state] (1r) [right of = 0]   {};
  \node [dots]  (dots2)  [right of = 1r] {$\cdots$};
  \node [state] (S1-1) [right of = dots2]  {};
  \node [state ,label=center:$S$] (S1) [right of = S1-1]  {};
  \path [->,draw,thin,font=\footnotesize]  (S0-1) edge[bend left=45] node[below ] {$1-p$} (S0);
  \path [->,draw,thin,font=\footnotesize]  (0) edge[bend left=45] node[below] {$1-p$} (1l);
  \path [->,draw,thin,font=\footnotesize]  (1r) edge[bend left=45] node[below right] {$1-p$} (1l);
  \path [->,draw,thin,font=\footnotesize]  (S1-1) edge[bend left=45] node[below right] {$1-p$} (1l);
  \path [->,draw,thin,font=\footnotesize]  (1l) edge[bend left=45] node[below ] {$1-p$} (dots1);
  
  \path [->,draw,thin,font=\footnotesize]  (S0-1) edge[bend left=45] node[above left] {$p$} (1r);
  \path [->,draw,thin,font=\footnotesize]  (1l) edge[bend left=45] node[above left] {$p$} (1r);
  \path [->,draw,thin,font=\footnotesize]  (0) edge[bend left=45] node[above ] {$p$} (1r);
  \path [->,draw,thin,font=\footnotesize]  (1r) edge[bend left=45] node[above] {$p$} (dots2);
  \path [->,draw,thin,font=\footnotesize]  (S1-1) edge[bend left=45] node[above ] {$p$} (S1);
 \path [->,draw,thin,font=\footnotesize]  (S0)  edge[loop left]  node[above ]{$1$} (S0);
 \path [->,draw,thin,font=\footnotesize]  (S1)  edge[loop right] node[above ]{$1$} (S1);
\end{tikzpicture}
\caption{Counting consecutive runs of heads or tails}		\label{fig:Successive runs}
\end{figure}
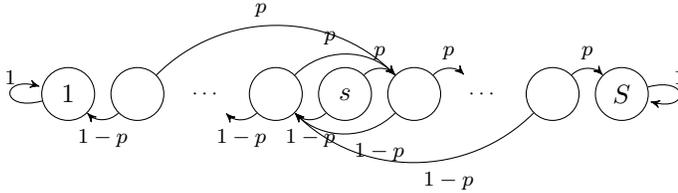
\end{center}
According to (\cite{feller1968introduction}, chapter VIII) the probability of observing a run of $a$ consecutive ones before a run of $b$ consecutive zeros under $\mathcal{H}_0$, which corresponds to the probability of absorption in state $S$ when starting in state $s$ for the machine of Figure~\ref{fig:Successive runs}, is
\begin{align}
p^0_0(s)&\triangleq \frac{1-(1-p)^b}{1+\frac{(1-p)^{b-1}}{p^{a-1}}-(1-p)^{b-1}}\\&=\frac{1-(1-p)^{s-1}}{1+\frac{(1-p)^{s-2}}{p^{S-s-1}}-(1-p)^{s-2}}.
\end{align}
Consequently, the probability of absorption in state $1$ when starting in state $s$ under the same hypothesis is
\begin{align}
p^1_0(s)&\triangleq \frac{1-p^a}{1+\frac{p^{a-1}}{(1-p)^{b-1}}-p^{a-1}}\\&=\frac{1-p^{S-s}}{1+\frac{p^{S-s-1}}{(1-p)^{s-2}}-p^{S-s-1}}.
\end{align}
Similarly, the respective probabilities under $\mathcal{H}_1$ are
\begin{align}
  p^0_1(s)&\triangleq \frac{1-(1-q)^{s-1}}{1+\frac{(1-q)^{s-2}}{q^{S-s-1}}-(1-q)^{s-2}},\\ p^1_1(s)&\triangleq \frac{1-q^{S-s}}{1+\frac{q^{S-s-1}}{(1-q)^{s-2}}-q^{S-s-1}}.
\end{align}
Since all states on the chain are transient apart from $\{1,S\}$, when $n$ is large the machine converges to one of these states with probability one.
Hence, the error probability is
\begin{align}
\Pe(s)=\frac{1}{2}(p^0_1(s)+p^1_0(s)).
\end{align}
Choosing $s=s^*$, where $s^*$ is  
\begin{align}
    \frac{\log pq}{\log p(1-p)+\log q(1-q)}S+\log \left(\frac{\frac{(1-q)^2}{q}\log q(1-q)}{\frac{p}{(1-p)^2}\log p(1-p)}\right),
\end{align}
rounded to the nearest integer, we have
\begin{align}
   \Pe(s^*)\leq &\max \left\{\frac{p^{1+c}}{(1-p)^{2-c}},\frac{(1-q)^{1+c}}{q^{2-c}}\right\} \cdot 2^{-r\left(p,q\right)  (S-1)  }
\end{align}
where $c=\log \frac{(1-p)^2(1-q)^2\log q(1-q)}{pq\log p(1-p)}$ and the result follows.
%For $p > q$ we have that $d(p,q)$ is monotonically decreasing in $p$, since
%\begin{align}
 %  \frac{\partial d(p,q)}{\partial p} &= \log e \cdot \left(\frac{\log (1-q)}{p}+\frac{\log q}{1-p}\right)\\&=-\log e \cdot \left(\frac{H(p,q)}{p(1-p)}\right)
%\end{align}
%where $H(p,q)$ is the cross-entropy between $p$ and $q$. Since $d(p,p)=0$, we have that $d(p,q)<0$ for all $p > q$, which implies that the error probability is indeed vanishing with $S$. 

\section{Converse} \label{sect:converse_proof}	The converse of Hellman and Cover implicitly assumes that the transition probabilities between states can be as small as desired, which is true when local randomness is an unlimited resource. In deterministic machines, however, the transition probabilities can only be as small as $\min (p,1-p)$ under $\mathcal{H}_0$, or $\min (q,1-q)$ under $\mathcal{H}_1$, a fact that plays a crucial role in the proof of our converse result. We note that any finite-state machine induces a Markov chain, and proceed to prove Theorem~\ref{thrm:converse} in steps, first for ergodic Markov chains, and then for non-ergodic ones. For brevity, we denote $\Pe=\Pe(f,d)$.
\subsection{Ergodic Markov chains}
Assume the finite state machine is irreducible and aperiodic, such that the induced Markov chain is ergodic under both hypotheses. We note that, due to irreducibility, the average fraction of time spent in each state converges to a unique stationary distribution. Thus, the proof below still holds for periodic chains.

Denote by $\mu_i^p$ (resp. $\mu_i^q$) the stationary probability of state $i$ in the chain, under hypothesis $\mathcal{H}_0$ (resp. $\mathcal{H}_1$). Due to the equal prior on the hypotheses, the decision rule $d$ that minimizes~\eqref{eq:pefd} maps each state to the hypothesis with the larger stationary probability. We show that there must exist a state $i \in [S]$ for which both $\mu_i^p$ and $\mu_i^q$ are large, and that this forces $\Pe$ to be large as well.
%We can then lower bound the error probability by computing the probability of the smallest directed path to the "wrong" section of the chain, i.e., where we decide on the wrong hypothesis. 
We now proceed to formalize this idea.

\begin{lemma}\label{lem:Pmin}
Let $\{\mu_i^p\}_{i=1}^S$ be the stationary probabilities corresponding to $\Mat (p)$, and let $\{\mu_i^q\}_{i=1}^S$ be the stationary probabilities corresponding to $\Mat (q)$. Then
    \begin{align}
        \Pe \geq \frac{1}{2}\underset{i}{\max}\min \{\mu_i^p,\mu_i^q\}\triangleq \Pm\left(\{\mu_i^p\},\{\mu_i^q\}\right).
        \label{eq:maxmin_bound}
    \end{align}
\end{lemma}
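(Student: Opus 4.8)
The plan is to lower-bound the Bayes error probability by relating it to the stationary distributions under the two hypotheses. First I would recall that, under the equal prior and the asymptotic (time-averaged) definition of error in~\eqref{eq:pefd}, the long-run fraction of time the chain spends in each state $i$ converges to $\mu_i^p$ under $\mathcal{H}_0$ and to $\mu_i^q$ under $\mathcal{H}_1$, independently of the initial state (by irreducibility). Hence the asymptotic error probability decomposes as a sum over states: a state $i$ assigned by the optimal decision rule $d$ to $\mathcal{H}_0$ contributes its stationary mass under $\mathcal{H}_1$ to the error (weighted by the prior $1/2$), and symmetrically for states assigned to $\mathcal{H}_1$. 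Concretely, since $d$ maps each $i$ to $\argmax$ of the two stationary masses, the error contribution of state $i$ is exactly $\tfrac{1}{2}\min\{\mu_i^p,\mu_i^q\}$, so that
\begin{align}
    \Pe=\frac{1}{2}\sum_{i=1}^S \min\{\mu_i^p,\mu_i^q\}. \label{eq:pe_sum}
\end{align}

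Once~\eqref{eq:pe_sum} is established, the lemma follows immediately by dropping all terms in the sum except the largest one: for any $i$,
\begin{align}
    \frac{1}{2}\sum_{j=1}^S \min\{\mu_j^p,\mu_j^q\}\geq \frac{1}{2}\min\{\mu_i^p,\mu_i^q\},
\end{align}
and maximizing the right-hand side over $i$ yields $\Pe\geq \tfrac{1}{2}\max_i \min\{\mu_i^p,\mu_i^q\}=\Pm$. This final step is purely mechanical.

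The substantive work, and the main obstacle, is therefore justifying~\eqref{eq:pe_sum} rather than merely an inequality $\Pe\geq \tfrac{1}{2}\sum_i \min\{\mu_i^p,\mu_i^q\}$ (which is all the lemma actually needs). Two points require care. First, I must argue that the optimal $d$ is indeed the maximum-a-posteriori rule on the stationary masses; this uses the equal prior and the fact that~\eqref{eq:pefd} is a Ces\`aro average of the per-time error probabilities, whose limit is governed by the stationary distribution under each hypothesis. Second, I must verify the interchange of the limit in~\eqref{eq:pefd} with the sum over states and the conditioning on the hypothesis — that is, that $\lim_{n\to\infty}\tfrac{1}{n}\sum_{k=1}^n \Pr(M_k=i\mid \mathcal{H}_0)=\mu_i^p$. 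For a finite-state irreducible chain this Ces\`aro convergence holds even in the periodic case (which is precisely why the remark preceding the lemma notes the bound survives periodicity), so I would invoke the standard ergodic theorem for finite Markov chains here. For the purposes of the lemma, even the one-sided bound $\Pe\geq \tfrac{1}{2}\min\{\mu_i^p,\mu_i^q\}$ for a single well-chosen $i$ suffices, which softens the obstacle: it is enough to observe that in the long run state $i$ is occupied a $\mu_i^p$ fraction of the time under $\mathcal{H}_0$ and a $\mu_i^q$ fraction under $\mathcal{H}_1$, and since $d(i)$ commits to one hypothesis, the error under the other hypothesis alone already contributes at least $\tfrac{1}{2}\min\{\mu_i^p,\mu_i^q\}$ to $\Pe$.
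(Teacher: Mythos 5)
Your proposal is correct and follows essentially the same route as the paper: under the uniform prior, the optimal decision rule maps each state to the hypothesis with larger stationary mass, giving $\Pe=\frac{1}{2}\sum_i \min\{\mu_i^p,\mu_i^q\}$, after which the bound follows by retaining only the largest term. The extra care you take in justifying the Ces\`aro convergence of state-occupation frequencies to the stationary distribution is handled in the paper by the remark preceding the lemma, so there is no substantive difference.
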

\begin{proof}
 Since the prior on the hypotheses is uniform, the decision rule $d$ that minimizes~\eqref{eq:pefd} is of the form $d(i)=\ind (\mu_i^q\geq\mu_i^p)$. Hence
\begin{align}
    \Pe&=\frac{1}{2}\sum_i \mu_i^p \ind (\mu_i^q\geq\mu_i^p)+ \frac{1}{2}\sum_i \mu_i^q \ind (\mu_i^p>\mu_i^q)\\& =\frac{1}{2}\sum_i \min \{\mu_i^p,\mu_i^q\} \\&\geq \frac{1}{2}\underset{i}{\max}\min \{\mu_i^p,\mu_i^q\}.
\end{align}
\end{proof}
\begin{lemma}\label{lem:order_decrease}
Let $\{\mu_i^{\downarrow p}\}_{i=1}^S$ be an arrangement of $\{\mu_i^p\}$ in non-increasing order and $\{\mu_i^{\uparrow q}\}_{i=1}^S$ be an arrangement of $\{\mu_i^q\}$ in non-decreasing order. Then $\Pm\left(\{\mu_i^p\},\{\mu_i^q\}\right)\geq \Pm\left(\{\mu_i^{\downarrow p}\},\{\mu_i^{\uparrow q}\}\right)$. 
\end{lemma}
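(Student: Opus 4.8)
The plan is to show that the ``anti-sorted'' pairing---the largest $\mu^p$ value placed at the same index as the smallest $\mu^q$ value, the second largest with the second smallest, and so on---\emph{minimizes} the quantity $\max_i \min\{\mu_i^p,\mu_i^q\}$ over all ways of pairing up the two multisets of stationary probabilities. Since $\Pm$ is exactly half of this max-min quantity, this immediately yields the claimed inequality. I would first observe that $\max_i \min\{\mu_i^p,\mu_i^q\}$ depends only on the \emph{pairing} (which $\mu^p$ value sits at the same index as which $\mu^q$ value) and is invariant under applying a common permutation to both index sequences. Hence I may relabel the states so that $\{\mu_i^p\}$ is in non-increasing order, leaving $\{\mu_i^q\}$ in some arbitrary order; the target anti-sorted configuration is then reached by permuting only the $\mu^q$ values into non-decreasing order.

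The heart of the argument is a single adjacent-swap lemma. Suppose two indices carry the pairs $(a_1,b_1)$ and $(a_2,b_2)$ with $a_1\ge a_2$ but $b_1>b_2$ (an inversion relative to the anti-sorted target). Swapping the $b$-values produces $(a_1,b_2)$ and $(a_2,b_1)$, and I claim the max-min cannot increase. Indeed, by monotonicity of $\min$ in each argument, $b_2\le b_1$ gives $\min\{a_1,b_2\}\le \min\{a_1,b_1\}$, and $a_2\le a_1$ gives $\min\{a_2,b_1\}\le \min\{a_1,b_1\}$; hence both post-swap terms are bounded by $\min\{a_1,b_1\}$, which is itself at most the pre-swap maximum of the two terms. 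All other indices are left untouched, so the global $\max_i \min\{\mu_i^p,\mu_i^q\}$ does not increase under the swap.

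Finally I would invoke a bubble-sort argument: starting from the configuration with $\{\mu_i^p\}$ non-increasing and $\{\mu_i^q\}$ arbitrary, repeatedly apply the swap above to adjacent out-of-order pairs of $\mu^q$ values. Each such swap strictly reduces the number of inversions in the $\mu^q$ sequence while leaving $\{\mu_i^p\}$ non-increasing, so after finitely many swaps $\{\mu_i^q\}$ is non-decreasing---precisely the anti-sorted configuration $(\{\mu_i^{\downarrow p}\},\{\mu_i^{\uparrow q}\})$---and along the entire chain the max-min never increased. Dividing by two gives $\Pm(\{\mu_i^p\},\{\mu_i^q\})\ge \Pm(\{\mu_i^{\downarrow p}\},\{\mu_i^{\uparrow q}\})$. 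I do not expect a genuine obstacle here, as each step is elementary; the only point requiring slight care is the bookkeeping that the common-permutation reduction together with the $\mu^q$-only bubble sort realize exactly the fully anti-sorted pairing, rather than merely some intermediate pairing with the same max-min value.
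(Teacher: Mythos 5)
Your proof is correct and follows essentially the same route as the paper: relabel states so that $\{\mu_i^p\}$ is non-increasing, then use an exchange argument showing that each swap of $\mu^q$ values fixing an inversion cannot increase $\max_i \min\{\mu_i^p,\mu_i^q\}$, via exactly the same two $\min$/$\max$ inequalities. The only cosmetic difference is that you restrict to adjacent swaps and invoke bubble sort explicitly, whereas the paper swaps arbitrary inverted pairs $(i,j)$ with $j>i$; the core argument is identical.
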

\begin{proof}
Since $\Pm\left(\{\mu_i^p\},\{\mu_i^q\}\right)$ is invariant to relabeling of the states, without loss of generality, we may assume $\{\mu_i^p\}=\{\mu_i^{\downarrow p}\}$. It suffices to show that if $\mu_j^q\leq \mu_i^q$ for $j>i$, then swapping $\mu_j^q$ with $\mu_i^q$ cannot increase the maxmin in~\eqref{eq:maxmin_bound}. Let $j>i$ and let $\left(\mu_i^p,\mu_i^q\right)=(a,c)$, $\left(\mu_j^p,\mu_j^q\right)=(b,d)$, where $a\geq b,c \geq d$. The restriction of the maxmin to the nodes $(i,j)$ is given by 
\begin{align}
 \max \left(\min\{a,c\},\min \{b,d\}\right)\geq \min\{a,c\} .  
\end{align}
Replacing $\mu_j^q$ with $\mu_i^q$ changes this value to 
\begin{align}
\max \left(\min\{a,d\},\min \{b,c\}\right) &\leq \max \left(\min\{a,c\},\min \{a,c\}\right)\nonumber\\&= \min\{a,c\},  
\end{align}
which clearly cannot increase the maxmin.
\end{proof}

The next lemma exploits the restriction to deterministic machines.
\begin{lemma}\label{lem:order_increase}
Let $\{\mu_i^{\downarrow p}\}_{i=1}^S$ be an arrangement of $\{\mu_i^p\}$ in non-increasing order. Then:
\begin{align}
    \mu _{i+1}^{\downarrow p}\geq \mu_i^{\downarrow p}\cdot \min \{p,1-p\}.
\end{align}
\end{lemma}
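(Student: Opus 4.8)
The plan is to exploit the fact that in a deterministic machine every nonzero entry of $\Mat(p)$ equals either $p$, $1-p$, or $1$, so that each transition probability is bounded below by $\min\{p,1-p\}$. Writing $\alpha \triangleq \min\{p,1-p\}$, I would first record the elementary consequence of stationarity: from any state $i$ the chain moves to $f(i,0)$ with probability $1-p \geq \alpha$ and to $f(i,1)$ with probability $p \geq \alpha$. Hence, for \emph{every} successor $j \in \{f(i,0),f(i,1)\}$ of $i$, stationarity $\mu_j^p = \sum_k \mu_k^p\, p_{kj}(p)$ gives $\mu_j^p \geq \mu_i^p\, p_{ij}(p) \geq \alpha\, \mu_i^p$. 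In words, each successor of a state inherits at least an $\alpha$-fraction of its stationary mass; this single inequality is the only place the deterministic restriction enters.

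Next I would turn the desired inequality $\mu_{i+1}^{\downarrow p} \geq \alpha\, \mu_i^{\downarrow p}$ into a counting statement. Fix $i \in \{1,\dots,S-1\}$, let $A$ be any set of $i$ states attaining the largest $i$ stationary probabilities, and let $B \triangleq \{\, j : \mu_j^p \geq \alpha\, \mu_i^{\downarrow p}\,\}$. Since every state of $A$ has $\mu^p \geq \mu_i^{\downarrow p} \geq \alpha\,\mu_i^{\downarrow p}$, we have $A \subseteq B$, and it suffices to prove $|B| \geq i+1$, for then the $(i+1)$-th largest stationary probability is at least $\alpha\,\mu_i^{\downarrow p}$. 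Suppose toward a contradiction that $|B|=i$, i.e.\ $B=A$. By the inequality from the first step, every successor $f(j,x)$ of a state $j \in A$ satisfies $\mu_{f(j,x)}^p \geq \alpha\,\mu_j^p \geq \alpha\,\mu_i^{\downarrow p}$, so $f(j,x) \in B = A$. Thus $A$ is closed under the transition function. But the chain is irreducible, so its only closed subsets are $\emptyset$ and $[S]$; as $1 \leq |A| = i < S$, this is impossible, and therefore $|B| \geq i+1$, completing the argument.

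The main obstacle is recognizing the right reformulation in the second step: that the inequality can fail only if the top-$i$ states form an invariant (absorbing) set, which irreducibility forbids. The first step is where determinism is essential---without the uniform lower bound $\alpha$ on transition probabilities, a successor could receive an arbitrarily small fraction of its predecessor's mass, and the sorted stationary probabilities could then decay faster than any fixed geometric rate, exactly as happens for the randomized machines of Hellman and Cover. I would also flag that irreducibility (hence strict positivity of the stationary distribution) is genuinely needed here: for a reducible chain the top states can form a proper closed class while all remaining states carry zero mass, in which case $\mu_{i+1}^{\downarrow p}=0$ and the bound fails, which is precisely why this lemma lives in the ergodic subsection.
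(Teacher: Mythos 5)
Your proof is correct and is essentially the paper's own argument, just run in the contrapositive: the paper uses irreducibility to directly exhibit a one-step transition from the top-$i$ set $A$ to $A^c$ and pushes stationary mass across it via $\mu_j^p \geq \mu_{j'}^p\,p_{j'j}(p) \geq \mu_i^{\downarrow p}\min\{p,1-p\}$, whereas you assume no state outside $A$ has enough mass and conclude that $A$ would be closed, contradicting irreducibility. Both rest on exactly the same two ingredients---determinism forcing every nonzero transition probability to be at least $\min\{p,1-p\}$, and irreducibility forbidding a proper closed subset---so this is the same proof in different packaging.
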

\begin{proof}
Without loss of generality, we may relabel the states such that $\mu_i^{\downarrow p}=\mu_i^p$, for all $i$. Let $A=\{1,\ldots,i\}$ and consider the partition of $S$ to $S = A\cup A^c$. Since the chain is irreducible, there is some $j\in A^c$ that is accessible from some $j'\in A$ in one step. Then
\begin{align}
\mu _{i+1}^{\downarrow p}\geq \mu _j^{\downarrow p}&\geq \mu^{\downarrow p}_{j'}\cdot \min \{p,1-p\}\\&\geq \mu_i^{\downarrow p}\cdot \min \{p,1-p\}.  
\end{align}
\end{proof}
% We are now ready to prove Theorem~\ref{thrm:converse} for the ergodic case.
\subsection*{Proof of Theorem~\ref{thrm:converse} for ergodic Markov chains:}
A repeated application of Lemma~\ref{lem:order_increase} implies that 
\begin{align}
 \mu _{i}^{\downarrow p}&\geq\mu _{1}^{\downarrow p}\min \{p,1-p\}^{i-1}\\&\geq \frac{1}{S}\min \{p,1-p\}^{i-1},   
\end{align}
as well as 
\begin{align}
 \mu _{i}^{\uparrow q}&\geq \mu _{S}^{\uparrow q}\min \{q,1-q\}^{S-i}\\&\geq \frac{1}{S}\min \{q,1-q\}^{S-i},   
\end{align}
where we used the fact that the largest stationary probability among all states must be at least $\frac{1}{S}$. From Lemma~\ref{lem:Pmin} and Lemma~\ref{lem:order_decrease}, by ordering $\mu_i^p$ in decreasing order and $\mu_i^q$ in increasing order, we get the following lower bound on the error probability, 
\begin{align}
    \Pe\geq \frac{1}{S}\cdot \underset{i}{\max}\min \left\{\min \{p,1-p\}^{i-1},\min \{q,1-q\}^{S-i}\right\}.\label{eq:ergodic_pe}
\end{align}
Since both functions are monotone in $1\leq i\leq S$, one is decreasing from $1$ and the other is increasing to $1$, the maximum over $i\in [1,S]$ is attained for $i$ such that $\min \{p,1-p\}^{i-1}=\min \{q,1-q\}^{S-i}$, namely, for
\begin{align}
   i=\frac{\log\min\{q,1-q\}}{\log\left(\min \{p,1-p\}\min \{q,1-q\}\right)} S +\log\min \{p,1-p\}.\label{eq:optimal_i}
\end{align}
As $i$ must be an integer, the expression above should be rounded up or down. However, asymptotically this has no effect on the bound. Substituting~\eqref{eq:optimal_i} into~\eqref{eq:ergodic_pe}, the theorem follows for the ergodic case.
\subsection{Non-Ergodic Markov chains}
Consider first the case where we have only two absorbing states, one for each hypothesis, i.e., assume that we decide  $\mathcal{H}_0$ if the process is absorbed in state $S$ and  $\mathcal{H}_1$ if the process is absorbed in state $1$ . Define $X_0$ and $X_1$ as the independent random walks under $\mathcal{H}_0$ and $\mathcal{H}_1$. Then $X_0$ (resp. $X_1$) is a stochastic process over the alphabet $[S]$ that starts at $s$ and evolves according to the stochastic matrix $\Mat(p)$ (resp. $\Mat(q)$). Define the conditional error probabilities:
\begin{align}
   p_0 &= \Pr( 1\in X_0) ,\label{eq:p0ex}
   \\p_1 &= \Pr(S\in X_1),\label{eq:p1ex}
\end{align}
and hence $\Pe=\frac{1}{2}(p_0+p_1)$. Define the total distance of a state $u$ to be the smallest sum of lengths of two simple paths from $u$ to $1$ and from $u$ to $S$, and denote it by $\td(u)$. Furthermore, define the occupancy of a state $u$ to be the minimal probability that one of the random walks will visit it, i.e., $\occ(u)\triangleq \min_i \Pr(u^* \in X_i)$. 
A simple bound on the error probability of any system is the probability of the shortest path from $s$ to the incorrect absorbing state under either hypothesis. However, such a bound may not be tight, since $s$ itself can only be guaranteed to have $\td(s)\leq 2S$. To see this, consider that the shortest path to each state cannot be larger than $S$, and is exactly $S$ for the linear graph that splits at the last node to either absorbing state. On the other hand, the best possible guarantee we can hope for is total distance of $S$, which corresponds to a chain in which the shortest paths are non-intersecting.  
This motivates us to find a state with the smallest possible total distance and a non-negligible occupancy.
\begin{lemma}\label{lem:walk_to_sink}
There exists a state $u^*$ with $\td(u^*)\leq S$ and 
\begin{align}
    \occ(u^*) \geq \frac{1- \max\{p_0,p_1\}}{S},
\end{align}
where $p_0$ and $p_1$ are as in~\eqref{eq:p0ex},~\eqref{eq:p1ex}.
%Consider a Markov chain with $S$ states and transition matrix $\Mat (p)$ (resp. $\Mat (q)$) under $\mathcal{H}_0$ (resp. $\mathcal{H}_1$). Further assume that $s$ is the starting state, and $0$ and $1$ are two absorbing states, such that $p_0=\Pr (s\rightarrow 0|\mathcal{H}_0)$ and $p_1=\Pr (s\rightarrow 1|\mathcal{H}_1)$ and $0<p_i\leq \frac{1}{4},i=1,2$. Then
%\begin{align}
%  p_{AV} \triangleq\frac{1}{2} ( p_0+p_1)\geq \overline{P}
%\end{align} 
\end{lemma}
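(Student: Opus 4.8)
The plan is to produce a single state $u^*$ that is good for \emph{both} hypotheses simultaneously: it should be reachable with non-negligible probability by each of the two random walks, while sitting close to both absorbing states in the total-distance sense. The difficulty is exactly this simultaneity --- an averaging argument applied separately to each walk would hand us a high-occupancy state for $X_0$ and a possibly \emph{different} one for $X_1$. I would resolve this by exploiting the fact that $X_0$ and $X_1$ run on the same underlying transition graph, so that a purely topological bottleneck is automatically a bottleneck for both walks. First I dispose of a degenerate case: let $V_1$ (resp.\ $V_S$) be the set of states from which $1$ (resp.\ $S$) is reachable, and $W=V_1\cap V_S$. If $s\notin V_1$, then under $\mathcal{H}_1$ the walk can never reach $1$, forcing $p_1=1$, so $\max\{p_0,p_1\}=1$ and the bound is vacuous; hence I may assume $s\in W$, where $r_i(s)\triangleq\Pr(s\in X_i)=1$ for $i=0,1$.

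The total-distance half is purely structural. If a state $u$ admits a simple path to $1$ and a simple path to $S$ that are internally vertex-disjoint, then their union is a single simple path through $1$, $u$, $S$ using at most $S$ distinct vertices, so the combined length is at most $S-1$ and therefore $\td(u)\le S-1\le S$. Thus it suffices to reach a state possessing internally disjoint paths to the two sinks.

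The engine is a descent. Starting from the current state $u\in W$ (initially $s$), I ask whether such disjoint paths exist. Introduce an auxiliary super-sink $t$ with edges $1\to t$ and $S\to t$; two internally disjoint $u$-to-$t$ paths must leave through distinct penultimate vertices and hence split into disjoint $u\to 1$ and $u\to S$ paths. If no such pair exists, Menger's theorem yields a single cut vertex $c\neq u$ lying on every path from $u$ to either sink; moreover $c\notin\{1,S\}$, since the $u\to S$ route cannot pass through the sink $1$ (and symmetrically), so $c\in W$. Because every walk from $u$ that is eventually absorbed contains a simple $u$-to-sink path, and each such path meets $c$, the strong Markov property gives $r_i(c)\ge r_i(u)$ for \emph{both} $i$ at once --- this is the step that defeats the simultaneity obstacle, and it works precisely because $c$ is defined from the graph alone. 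Furthermore $d(c,1)+d(c,S)<d(u,1)+d(u,S)$, so the potential $d(\cdot,1)+d(\cdot,S)$ drops at each step and the descent terminates at a state $u^*$ with internally disjoint paths, whence $\td(u^*)\le S$.

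Finally, the occupancy is inherited along the descent: $\occ(u^*)=\min_i r_i(u^*)\ge \min_i r_i(s)=1$, which already dominates the asserted $\tfrac{1-\max\{p_0,p_1\}}{S}$, so the claim follows a fortiori (one would only pay the $1/S$ and $1-\max\{p_0,p_1\}$ slack if the topological bottleneck were replaced by a cruder averaging over the at most $S$ states through which $W$ is exited). The step I expect to be the crux is justifying the common bottleneck $c$: one must check that a walk which wanders among transient states before absorption still meets $c$, which holds because any walk from $u$ to a sink contains a simple $u$-to-sink path and every such path passes through $c$. Verifying that this single graph-theoretic $c$ serves both $\mathcal{H}_0$ and $\mathcal{H}_1$ is where the essential content lies.
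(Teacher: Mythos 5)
Your proof is correct, and it takes a genuinely different route from the paper --- in fact it establishes a strictly stronger statement. The paper's argument is probabilistic: it conditions each walk on \emph{correct} absorption to get $\tilde{X}_0,\tilde{X}_1$, defines $U$ as the last state of $\tilde{X}_0$ that also appears on $\tilde{X}_1$, observes that $U$ almost surely lands in a set $\mathcal{C}$ of at most $S$ ``junction'' vertices (each of which has $\td\le S$ by exactly the disjoint-suffix counting you use), and then pigeonholes to find $u^*$ with $\Pr(U=u^*)\ge 1/S$ under the conditioned laws; unconditioning then costs the factor $1-\max\{p_0,p_1\}$, which is precisely where the stated bound $\frac{1-\max\{p_0,p_1\}}{S}$ comes from. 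You instead locate a common bottleneck \emph{deterministically}: directed vertex-Menger with the super-sink $t$ gives either internally disjoint paths to the two sinks (hence $\td\le S-1$) or a single cut vertex $c\notin\{u,1,S\}$ through which every absorbed trajectory from $u$ must pass under \emph{both} hypotheses, and the potential $d(\cdot,1)+d(\cdot,S)$ makes the descent terminate. The payoff is $\occ(u^*)=1$, which dominates the lemma's claim and would even remove the polynomial prefactor $\frac{1-\max\{p_0,p_1\}}{2S}$ from the downstream bound on $\Pe$ in the two-absorbing-state case (though this does not change the exponent, which is all Theorem~\ref{thrm:converse} needs). The trade-off is machinery: the paper's proof is elementary pigeonholing on top of the shared structural fact, whereas yours invokes Menger's theorem; both exploit the same key insight that the bottleneck is a property of the transition graph alone and therefore serves both hypotheses simultaneously. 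One small caveat: your dismissal of the degenerate case $s\notin V_1$ as ``vacuous'' settles the occupancy inequality but not the requirement $\td(u^*)\le S$ --- if state $1$ is unreachable from everywhere, no state has finite total distance and the lemma fails as literally stated; this blind spot is shared by the paper's own proof (its $\mathcal{A}$ or $\mathcal{B}$ would be empty there), and it is harmless in the application, where $\max\{p_0,p_1\}<1/2$ is assumed.
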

\begin{proof}
Let $\mathcal{A}$ (resp. $\mathcal{B}$) denote the collection of all simple paths that start at $s$ and terminate at $1$ (resp. $S$). Let $\mathcal{C}$ be the set of all vertices $v\in [S]$, for which there exist two simple paths $a \in \mathcal{A}$ and $b \in \mathcal{B}$, where $v$ is the last vertex in $a$ that also appears in $b$. 
%Also denote $\mathcal{C}=\{v_{ab}:a \in \mathcal{A}, b \in \mathcal{B}, |v_{ab}\rightarrow 1|+ |v_{ab}\rightarrow 0|\leq S\}$ as the last node on $a$ that is an intersection between $a$ and $b$. 
This implies that the sum of path lengths from any $v\in \mathcal{C}$ to $1$ and $S$ is smaller than $S$, i.e., $\forall v \in \mathcal{C}$ we have $\td(v)\leq S$. Define $\tilde{X}_0$ (resp. $\tilde{X}_1$) to be a stochastic process with the distribution of $X_0$ (resp. $X_1$) conditioned on the event that $X_0$ terminated at $S$ (resp. $1$). Define $U$ to be the last state on $\tilde{X}_0$ that also appears on $\tilde{X}_1$. Then by definition $\Pr(U\in \mathcal{C}) = 1 $, so there must be a state $u^* \in S$ such that 
\begin{align}
 \Pr(U=u^*)\geq \frac{1}{|\mathcal{C}|}\geq \frac{1}{S}.  
\end{align}
This in particular implies that  $\Pr(u^* \in \tilde{X}_0)\geq \frac{1}{S}$ and $\Pr(u^* \in \tilde{X}_1)\geq \frac{1}{S}$. Now, the probability of the unconditioned walk $X_1$, to pass through $u^*$ is lower bounded by 
\begin{align}
    \Pr\left(u^* \in X_1\right) &\geq \Pr(1\in X_1) \Pr(u^* \in X_1|1\in X_1)\\&=\Pr(1\in X_1) \Pr(u^* \in \tilde{X_1})\\&\geq \frac{1}{S} \left(1-p_1\right).
\end{align}
Similarly bounding $\Pr\left(u^* \in X_0\right)$, the lemma follows. 
\end{proof} 
\subsection*{Proof of Theorem~\ref{thrm:converse} for two absorbing states:}
%We proceed to prove Theorem~\ref{thrm:converse} for two absorbing states.
Without loss of generality, we may assume that $\max\{p_0,p_1\}  < 1/2$ as otherwise the theorem is trivially true. Furthermore, from Lemma~\ref{lem:walk_to_sink} there is some state $u^*$ with $\td(u^*)\leq S$ and $\occ(u^*) \geq \frac{1- \max\{p_0,p_1\}}{S}$. Write
\begin{align}
\Pe &\geq \frac{1}{2}\Pr(u^* \in X_0)\Pr(1 \in X_0|u^* \in X_0) \\&+\frac{1}{2}\Pr(u^* \in X_1)\Pr(S \in X_1|u^* \in X_1)\\ &\geq \frac{1- \max\{p_0,p_1\}}{2S}(\Pr(1 \in X_0|u^* \in X_0) \nonumber \\&\hspace{20mm}+\Pr(S \in X_1|u^* \in X_1)).
\end{align}
Let $m_{u^*}$ be the length of the shortest path from $u^*$ to $1$, and recall that we must have a path from $u^*$ to $S$ of length smaller than $S-m_{u^*}$, since that $\td(u^*)\leq S$. Thus,
\begin{align}
    &\Pr(1 \in X_0|u^* \in X_0)+\Pr(S \in X_1|u^* \in X_1)&\\\geq &\left(\min\{p,1-p\}\right) ^{m_{u^*}}+\left(\min\{q,1-q\}\right) ^{S-m_{u^*}}\label{eq:error_for_mu}. 
\end{align}
Minimizing the lower bound with respect to $m_{u^*}\in [0,S]$ yields
\begin{align}
 m_{u^*}=\frac{\log\min \{q,1-q\}}{\log\min \{p,1-p\}+\log\min \{q,1-q\}}\cdot S \label{eq:optimal_mu},  
\end{align}
and substituting~\eqref{eq:optimal_mu} into~\eqref{eq:error_for_mu}  implies the theorem for the case of two absorbing states.

\subsection*{Proof of Theorem~\ref{thrm:converse} for the general reducible case:}
%We are now ready to prove Theorem~\ref{thrm:converse} for the general reducible case. 
Consider a Markov chain with $K$ recurrent classes $\mathcal{R}_1,\ldots,\mathcal{R}_{K}$, and a set $\mathcal{T}$ of transient states with initial state $s$. Note that if $s\notin \mathcal{T}$ the chain is essentially an ergodic one, hence we consider only $s\in \mathcal{T}$.
%The proof can be conceptualize in the following manner: Assume we can only err within the classes $\mathcal{R}_0$ or $\mathcal{R}_1$, where in $\mathcal{R}_0$ (resp. $\mathcal{R}_1$) we can only err under $\mathcal{H}_1$ ( resp. $\mathcal{H}_0$). This clearly gives a lower bound on the probability of error, and that bound can be approached be considering a non-ergodic Markov chain with two absorbing states.
Define $X_0$ and $X_1$ as before, and denote the probability that $X_i$ ends up in class $\mathcal{R}_j$ as
\begin{align}
 \Pr \left(X_i\rightarrow \mathcal{R}_j\right),\hspace{1mm} i=0,1,\hspace{1mm} j=1,\ldots,K .
\end{align}
We further denote the  probability of error under hypothesis $\mathcal{H}_i$ if the initial state were in class $\mathcal{R}_j$ as $\Pe(\mathcal{R}_j|\mathcal{H}_i)$.
Consider first the case where the probability of error under $\mathcal{H}_0$ is larger than the probability of error under $\mathcal{H}_1$ in every recurrent class. Then
\begin{align}
  \Pe &\geq \frac{1}{2}\min_{1\leq j\leq K} \Pe(\mathcal{R}_j|\mathcal{H}_0)\label{eq:bound_H0} \\&\geq \frac{1}{2}\cdot 2^{-\underset{1\leq j\leq K}{\max}|\mathcal{R}_j|\cdot (d(p,q)+o(1))} \label{eq:one class_bound}\\&\geq 2^{-S\cdot (d(p,q)+o(1))} ,
\end{align}
where $d(p,q)$ was defined in~\eqref{eq:d_pq} and $o(1)$ is relative to $S$. Note that in~\eqref{eq:bound_H0} we bound the error probability under $\mathcal{H}_0$ with the smallest error probability across classes, and in~\eqref{eq:one class_bound} we used the fact that the error probability under $\mathcal{H}_0$ in class $\mathcal{R}_j$ is larger than the average error probability, and then used Theorem~\ref{thrm:converse} for the ergodic case. 

For the second case, we define the non-empty sets
\begin{align}
\mathcal{C}_1&=\{\mathcal{R}_k : \Pe(\mathcal{R}_k|\mathcal{H}_0)\geq \Pe(\mathcal{R}_k|\mathcal{H}_1)\},\\
\mathcal{C}_0&=\{\mathcal{R}_k : \Pe(\mathcal{R}_k|\mathcal{H}_0)<\Pe(\mathcal{R}_k|\mathcal{H}_1)\}.   
\end{align}
For any $k\in \mathcal{C}_1$, we have
\begin{align}
 \Pe(\mathcal{R}_k|\mathcal{H}_0)\geq 2^{-|\mathcal{R}_k| \cdot (d(p,q)+o(1))},  
\end{align}
and for any $k\in \mathcal{C}_0$ we have
\begin{align}
\Pe(\mathcal{R}_k|\mathcal{H}_1) \geq 2^{-|\mathcal{R}_k| \cdot (d(p,q)+o(1))},  
\end{align} 
according to Theorem~\ref{thrm:converse} for the ergodic case. Now, write
\begin{align}
 \Pe \geq \frac{1}{2} & \Pr(X_0\rightarrow \mathcal{C}_1)\min_{k\in \mathcal{C}_1}\Pe(\mathcal{R}_k|\mathcal{H}_0)\\+ \frac{1}{2}&\Pr(X_1\rightarrow \mathcal{C}_0)\min_{k\in \mathcal{C}_0}\Pe(\mathcal{R}_k|\mathcal{H}_1)
\\ \geq \frac{1}{2}&\left(\Pr(X_0\rightarrow \mathcal{C}_1)+\Pr(X_1\rightarrow \mathcal{C}_0)\right)\nonumber\\ \times
&2^{-\max\{\underset{k\in \mathcal{C}_1}{\max}|\mathcal{R}_k|,\underset{k'\in \mathcal{C}_0}{\max}|\mathcal{R}_{k'}|\}\cdot (d(p,q)+o(1))}
\\=\frac{1}{2}&\left(\Pr(X_0 \rightarrow \mathcal{C}_1)+\Pr(X_1 \rightarrow \mathcal{C}_0)\right)\nonumber \\\times &2^{-\underset{k}{\max}|\mathcal{R}_k|\cdot (d(p,q)+o(1))}.  \label{eqref:p0_condition}
\end{align}
Consider a chain with $|\mathcal{T}|+2$ states, obtained from the original chain by merging the states in $\mathcal{C}_0$ and $\mathcal{C}_1$ into two respectively absorbing states. 
Then Lemma~\ref{lem:walk_to_sink} holds, with
\begin{align}
p_0&=\Pr(X_0 \rightarrow \mathcal{C}_1),
\\p_1&=\Pr(X_1 \rightarrow \mathcal{C}_0).  
\end{align}
According to~\eqref{eqref:p0_condition}, we may assume that $\max_i p_i < 1/2$ as otherwise the theorem is trivially true. Now, repeating the same arguments as in the proof of the two absorbing states, one can show that 
\begin{align}
  &\Pr(X_0\rightarrow \mathcal{C}_1)+\Pr(X_1\rightarrow \mathcal{C}_0)\\ \geq & \frac{1-\max_i p_i}{|\mathcal{T}|+2}\cdot 2^{-(|\mathcal{T}|+2)\cdot (d(p,q)+o(1))}.
\end{align}
The proof follows by noting that $\underset{k}{\max}|R_k|+|\mathcal{T}|\leq S-1$.
%We conclude by proving Corollary~\ref{cor:tight}
%To show the tightness of the converse, we compare it to the machine of Theorem~\ref{thrm:direct}.
%\begin{align}
%    d(p,q)&=\lim_{p \rightarrow 1}-\frac{\log(1-p)\log(q)}{\log(1-p)+\log(q)}=-\log(q),\\r(p,q)&=\lim_{p \rightarrow 1}\frac{\log p \log (1-q)-\log q \log (1-p)}{\log p(1-p)+\log q(1-q)}\\&=\lim_{p \rightarrow 1}-\frac{\log q \log (1-p)}{\log (1-p)+\log q(1-q)}=-\log(q).
%\end{align}
%\section{The Symmetric Case $(p=1-q)$} \label{sect:symmetric}	
%Let us consider our bounds for the symmetric case $p=1-q$. The construction of Figure~\ref{fig:Successive runs} achieves an error exponent of $\frac{1}{2}\log\frac{p}{1-p}$, which coincides with that of the random walk machine, and is half the error exponent of the Hellman-Cover randomized machine. However, it is known not to be optimal due to~\cite{shubert1973testing}, in which the authors showed a deterministic machine that achieves an error exponent of $\frac{1}{2}\log\frac{p}{1-p}\left(1+\frac{\log p}{\log(1-p)}\right)$ for the same symmetric problem. A comparison between the exponent of~\cite{shubert1973testing} and the exponent of Theorem~\ref{thrm:converse}, which is presented in Figure~\ref{fig:compare}, shows that their machine is surprisingly close to optimal for large values of $p$. Furthermore, it highlights the gap between the exponent achievable through randomized systems and deterministic ones. 
\bibliography{binary_hypothesis_testing}
\bibliographystyle{ieeetr}
\end{document}